\tikzstyle{nloc}=[draw, text badly centered, rectangle, rounded corners, minimum size=2em,inner sep=0.5em]
\tikzstyle{loc}=[draw,rectangle,minimum size=1.4em,inner sep=0em]
\tikzstyle{trans}=[-latex, rounded corners]
\tikzstyle{trans2}=[-latex, dashed, rounded corners]
\tikzstyle{background}=[rectangle,fill=gray!10, inner sep=0.1cm, rounded corners=0mm]
\theoremstyle{plain}
\newtheorem{theorem}{Theorem}[section]
\newtheorem{lemma}[theorem]{Lemma}
\theoremstyle{definition}
\newtheorem{definition}[theorem]{Definition} 
\newtheorem{example}[theorem]{Example}
\def\rmdef{\stackrel{\mbox{\rm {\tiny def}}}{=}} 
\newcommand \Aa {\mathcal{A}}
\newcommand \Gg {\mathcal{G}}
\newcommand \Dd {\mathbb{D}}
\newcommand \PZero {Player Max}
\newcommand \POne  {Player Min}
\newcommand \Pmax {Player Max}
\newcommand \Pmin  {Player Min}
\newcommand \mMAX {\textrm{Max}}
\newcommand \mMIN  {\textrm{Min}}
\newcommand{\seq}[1]{\langle #1 \rangle}
\newcommand{\Plays}{\mathsf{Plays}}
\newcommand{\Real}{\mathbb{R}}
\newcommand\prof{\mathsf{Prof}}
\newcommand\val{\mathsf{val}}
\begin{document}

\title{Symmetric Strategy Improvement}
\author{Sven Schewe\textsuperscript{1}, Ashutosh Trivedi\textsuperscript{2}, and Thomas Varghese\textsuperscript{1} \\ }
\date{\emph{\textsuperscript{1}Department of Computer Science, University of Liverpool \\ \medskip \textsuperscript{2}Department of Computer Science and Engineering\\ Indian Institute of Technology -- Bombay}}

\pagestyle{plain}

\bibliographystyle{plain}

\maketitle

\begin{abstract}
Symmetry is inherent in the definition of most of the two-player zero-sum games,
including parity, mean-payoff, and discounted-payoff games. 
It is therefore quite surprising that no symmetric analysis techniques for these
games exist.
We develop a novel symmetric strategy improvement algorithm where, in each
iteration, the strategies of both players are improved simultaneously.   
We show that symmetric strategy improvement defies Friedmann's traps, which
shook the belief in the potential of classic strategy improvement to be
polynomial.  
\end{abstract}

\section{Introduction}
\label{sec:intro}
We study turn-based graph games between two
players---Player Min and Player Max---who take turns to move a token along the
vertices of a coloured finite graph so as to optimise their adversarial
objectives. 
Various classes of graph games are characterised by the objective of the
players, for instance in \emph{parity games} the objective is to optimise the parity
of the dominating colour occurring infinitely often, while in \emph{discounted and
  mean-payoff games} the objective is the discounted and limit-average sum of
the colours.  
Solving graph games is the central and most expensive step
in many model
checking~\cite{Kozen/83/mu,Emerson+all/93/mu,Wilke/01/Alternating,deAlfaro+Henziger+Majumdar/01/Control,Alur+Henziger+Kupferman/02/ATL},
satisfiability
checking~\cite{Wilke/01/Alternating,Kozen/83/mu,Vardi/98/2WayAutomata,Schewe+Finkbeiner/06/ATM},
and synthesis~\cite{Piterman/06/Parity,Schewe+Finkbeiner/06/Asynchronous}
algorithms.
More efficient algorithms for solving graph games will therefore foster the
development of performant model checkers and contribute to bringing synthesis
techniques to practice.
 
Parity games enjoy a special status among graph games and the quest for performant
algorithms \cite{Kozen/83/mu,Emerson+Lei/86/Parity,Emerson+Jutla/91/Memoryless,McNaughton/93/Games,Zwick+Paterson/96/payoff,Browne-all/97/fixedpoint,Zielonka/98/Parity,Jurdzinski/00/ParityGames,Ludwig/95/random,Puri/95/simprove,Voge+Jurdzinski/00/simprove,BjorklundVorobyov/07/subexp,Obdrzalek/03/TreeWidth,Lange/05/ParitySAT,Berwanger+all/06/ParityDAG,Jurdzinski/06/subex,Schewe/07/parity,Schewe/08/improvement,Fearnley/10/snare}  
for solving them has therefore been an active field of research during the last
decades. 
Traditional forward techniques \mbox{($\approx
O(n^{\frac{1}{2}c})$~\cite{Jurdzinski/00/ParityGames}} for parity games with $n$
positions 
and $c$ colours), backward techniques ($\approx\hspace*{-1.9pt}
O(n^{c})$~\cite{McNaughton/93/Games,Emerson+Lei/86/Parity,Zielonka/98/Parity}),
and their combination ($\approx \hspace*{-1.9pt}
O(n^{\frac{1}{3}c})$~\cite{Schewe/07/parity}) provide good complexity bounds. 
However, these bounds are sharp, and techniques with good complexity
bounds~\cite{Schewe/07/parity,Jurdzinski/00/ParityGames} frequently display
their worst case complexity on practical examples. 
Strategy improvement
algorithms~\cite{Ludwig/95/random,Puri/95/simprove,Voge+Jurdzinski/00/simprove,BjorklundVorobyov/07/subexp,Schewe/08/improvement,Fearnley/10/snare},
on the other hand, are closely related to the Simplex algorithm for solving
linear programming problems that perform well in practice. 

Classic strategy improvement algorithms are built around 
the existence of optimal positional strategies for both players.  
They start with an arbitrary positional strategy for a player and
iteratively compute a better positional strategy in every step until the
strategy cannot be further improved. 
Since there are only finitely many positional strategies in a finite graph, termination is guaranteed. 
The crucial step in a strategy improvement algorithm is to compute a better
strategy from the current strategy. 
Given a current strategy $\sigma$ of a player (say, \Pmax), this step is performed
by first computing the globally optimal counter 
strategy $\tau^c_\sigma$ of the opponent (\Pmin) and then computing the
value of each vertex of the game restricted to the strategies $\sigma$ and
$\tau^c_\sigma$.
For the games under discussion (parity, discounted, and mean-payoff) both of
these computations are simple and tractable.
This value dictates potentially locally profitable changes or switches
$\prof(\sigma)$ that \Pmax\  can make vis-{\`a}-vis his previous strategy $\sigma$. 
For the correctness of the strategy improvement algorithm it is required that such
locally profitable changes imply a global improvement. 
The strategy of \Pmax\ can then be updated according to a switching rule
(akin to pivoting rule of the Simplex) in order to give an improved strategy. 
This has led to the following template for classic strategy improvement algorithms.
\begin{algorithm}
  \caption{\label{alg:classic-sia} Classic strategy improvement algorithm}
  determine an optimal counter strategy $\tau^c_\sigma$ for $\sigma$\\ 
  evaluate the game for $\sigma$ and $\tau^c_\sigma$ and determine the
  profitable changes $\prof(\sigma)$ for $\sigma$ \\
  update $\sigma$ by applying changes from $\prof(\sigma)$ to $\sigma$\\
\end{algorithm}

A number of switching rules, including the ones inspired by Simplex pivoting
rules, have been suggested for strategy improvement algorithms.
The most widespread ones are to select changes for all game states where this is
possible, choosing a combination of those with an optimal update guarantee, or
to choose uniformly at random. 
For some classes of games, it is also possible to select an optimal combination
of updates \cite{Schewe/08/improvement}. 
There have also been suggestions to use more advanced randomisation techniques
with sub-exponential -- $2^{O(\sqrt{n})}$ -- bounds
\cite{BjorklundVorobyov/07/subexp} and snare memory \cite{Fearnley/10/snare}. 
Unfortunately, all of these techniques have been shown to be exponential in the
size of the game
\cite{Friedmann/11/lower,Friedmann/11/Zadeh,Friedmann/13/snare}. 

Classic strategy improvement algorithms treat the two players involved quite
differently where at each iteration one player computes a globally optimal
counter strategy, while the other player performs local updates.
In contrast, a  \emph{symmetric strategy improvement} algorithm symmetrically 
improves the strategies of both players at the same time, and uses the finding
to guide the strategy improvement.
This suggests the following na\"ive symmetric approach.
\begin{algorithm}
  \caption{\label{alg:classic-ssia} Na\"ive symmetric strategy improvement algorithm}
  determine $\tau' = \tau^c_\sigma$ \hspace{50mm} determine $\sigma' = \sigma^c_\tau$\\
  update $\sigma$ to $\sigma'$  \hspace{56.6mm} update $\tau$ to $\tau'$\\
\end{algorithm}

This algorithm has earlier been suggested by Condon~\cite{Condon93onalgorithms}
where it was shown that a repeated application of this update can lead to cycles
\cite{Condon93onalgorithms}.   
A problem with this na\"ive approach is that there is no guarantee that the
primed strategies are generally better than the unprimed ones. 
With hindsight this is maybe not very surprising, as in particular no improvement
in the evaluation of running the game with $\sigma',\tau'$ can be expected over
running the game with $\sigma,\tau$, as an improvement for one player is on the
expense of the other. 
This observation led to the approach being abandoned.
In this paper we propose the following  more careful symmetric strategy
improvement algorithm that guarantees improvements in each iteration similar to
classic strategy improvement.  

\begin{algorithm}
  \caption{\label{alg:novel-ssia} Symmetric strategy improvement algorithm}
  determine $\tau^c_\sigma$
  \hspace{58mm}
  determine $\sigma^c_\tau$\\
  determine $\prof(\sigma)$ for $\sigma$
  \hspace{40.3mm} 
  determine $\prof(\tau)$ for $\tau$\\
  update $\sigma$ using $\prof(\sigma) \cap \sigma^c_\tau$
  \hspace{33.4mm} 
  update $\tau$ using $\prof(\tau) \cap \tau^c_\sigma$\\
\end{algorithm}

The main difference to classic strategy improvement approaches is that we exploit the
strategy of the other player to inform the search for a good improvement step. 
In this algorithm we select only such updates to the two strategies that agree
with the optimal counter strategy to the respective other's strategy. 
We believe that this will provide a gradually improving advice function that
will lead to few iterations. 
We support this assumption by showing that this algorithm suffices to escape the
traps Friedmann has laid to establish lower bounds for different types of
strategy improvement algorithms
\cite{Friedmann/11/lower,Friedmann/11/Zadeh,Friedmann/13/snare}.

\section{Preliminaries}
\label{sec:prelims}

We focus on turn-based zero-sum games played
between two players---\PZero{} and \POne{}---over finite graphs.  
A game arena $\Aa$ is a tuple $(V_\mMAX, V_\mMIN, E, C, \phi)$ where
$(V = V_\mMAX \cup V_\mMIN, E)$ is a finite directed graph with the set of
vertices $V$ partitioned into a set $V_\mMAX$ of vertices controlled by \Pmax\
and a set $V_\mMIN$ of vertices controlled by \Pmin, 
$E \subseteq V \times V$ is the set of edges,  $C$ is a set of colours, 
$\phi: V \to C$ is the colour mapping. 
We require that every vertex has at least one outgoing edge. 

A turn-based game over $\Aa$ is played between players by moving a token along
the edges of the arena. 
A play of such a game starts by placing a token on some initial vertex 
$v_0 \in V$.
The player controlling this vertex then chooses a successor vertex
$v_1$ such that $(v_0, v_1) \in E$ and the token is moved to this successor vertex. 
In the next turn the player controlling the vertex $v_1$ chooses the successor
vertex $v_2$ with $(v_1, v_2) \in E$ and the token is moved accordingly. 
Both players move the token over the arena in this manner and thus form a play
of the game. 
Formally, a play of a game over $\Aa$ is an infinite sequence of vertices
$\seq{v_0, v_1, \ldots} \in V^\omega$ such that, for all $i \geq 0$, we have that
$(v_i, v_{i+1}) \in E$.  
We write $\Plays_\Aa(v)$ for the set of plays over $\Aa$ starting from vertex
$v \in V$ and $\Plays_\Aa$ for the set of plays of the game. 
We omit the subscript when the arena is clear from the context. 
We extend the colour mapping $\phi: V \to C$ from vertices to plays by defining
the mapping $\phi: \Plays \to C^\omega$ as  
$\seq{v_0, v_1, \ldots} \mapsto \seq{\phi(v_0), \phi(v_1), \ldots}$.

\begin{definition}[Graph Games]
  A graph game $\Gg$ is a tuple $(\Aa, \eta, \prec)$ such that  $\Aa$ is
  an \emph  {arena}, $\eta: C^\omega \to \Dd$ is an evaluation function where
  $\Dd$ is the carrier set of a complete space, and $\prec$ is a 
  preference ordering over $\Dd$. 
\end{definition}

\begin{example}
  Parity, mean-payoff and discounted payoff games are graph games 
  $(\Aa, \eta, \prec)$ played on game arenas 
  $\Aa  = (V_\mMAX, V_\mMIN, E, \Real, \phi)$.
  For mean payoff games the evaluation function is $\eta : \seq{c_0, c_1, \ldots}
  \mapsto \liminf_{i \rightarrow \infty} \frac{1}{i}\sum_{j=0}^{i-1} c_j$, while
  for discounted payoff games with discount factor $\lambda \in [0,1)$ it is
    $\eta : \seq{c_0, c_1, \ldots} \mapsto  \sum_{i=0}^{\infty} \lambda^i c_i$
    with $\prec$ as the natural order over the reals.  
    For (max) parity games the evaluation function is 
    $\eta : \seq{c_0, c_1, \ldots} \mapsto \limsup_{i \rightarrow \infty} c_i$ 
    often used with a preference order $\prec_{\mathrm{parity}}$ where higher
    even colours are preferred over smaller even colours, even colours are
    preferred over odd colours, and smaller odd colours are preferred over higher 
    odd colours. 

In the remainder of this paper, we will use parity games where every colour is
unique, i.e., where $\phi$ is injective. 
All parity games can be translated into such games as discussed in
\cite{Voge+Jurdzinski/00/simprove}. 
For these games, we use a valuation function based on their progress measure. 
We define $\eta$ as $\seq{c_0, c_1, \ldots} \mapsto (c, C, d)$, where $c = \limsup_{i
  \rightarrow \infty} c_i$ is the dominant colour of the colour sequence, $d =
\min\{ i \in \omega \mid c_i = c\}$ is the index of the first occurrence of
$c$, and $C = \{c_i \mid i < d, c_i > c\}$ is the set of colours that occur
before the first occurrence of $c$. 
The preference order is defined as the following: we have $(c', C', d') \prec
(c, C, d)$ if
\begin{itemize} 
\item $c' \prec_{\mathrm{parity}} c$,
\item $c {=} c'$, the highest colour $h$ in the symmetric difference between $C$ and $C'$ is even,~and~in~$C$,
  
\item $c {=} c'$, the highest colour $h$ in the symmetric difference between $C$ and $C'$ is odd, and~in~$C'$,
  
\item $c = c'$ is even, $C = C'$, and $d<d'$, or
  
\item $c = c'$ is odd, $C = C'$, and $d>d'$. 
\end{itemize}
\end{example}


\begin{definition}[Strategies]
  A strategy of \Pmax\ is a function $\sigma: V^*V_\mMAX \rightarrow V$ such that
  $\big(v,\sigma(\pi v)\big) \in E$ for all $\pi \in V^*$ and $v \in V_\mMAX$. 
  Similarly, a strategy of \Pmin\ is a function $\tau: V^*V_\mMIN \rightarrow V$
  such that $\big(v,\sigma(\pi v)\big) \in E$ for all $\pi \in V^*$ and $v \in
  V_\mMIN$.  
  We write $\Sigma^\infty$ and $T^\infty$ for the set of strategies of
  \Pmax\ and \Pmin, respectively.
\end{definition}

\begin{definition}[Valuation]
 For a strategy pair  $(\sigma, \tau) \in \Sigma^\infty \times T^\infty$
and an initial vertex $v \in V$ we denote the unique play starting from the
vertex $v$ by $\pi(v, \sigma, \tau)$ and we write $\val_\Gg(v, \sigma, \tau)$ for
the value of the vertex $v$ under the strategy pair $(\sigma, \tau)$ defined as 
\[
\val_\Gg(v, \sigma, \tau) \rmdef \eta\big(\phi(\pi(v, \sigma, \tau))\big).
\] 

We also define the concept of the value of a strategy $\sigma \in \Sigma^\infty$
and $\tau \in T^\infty$ as 
\[
\val_{\Gg}(v,\sigma) \rmdef  \inf_{\tau \in T^\infty} \val_{\Gg}(v,\sigma,\tau)
\text{ and } 
\val_{\Gg}(v,\tau) \rmdef \sup_{\sigma \in \Sigma^\infty}
\val_\Gg(v,\sigma,\tau).
\]
We also extend the valuation for vertices to a valuation for the whole game by
defining $V$ dimensional vectors $\val_{\mathcal G}(\sigma) : v \mapsto \val_{\mathcal
  G}(v,\sigma)$ with the usual $V$ dimensional partial order $\sqsubseteq$,
where $\val \sqsubseteq \val'$ if, and only if, $\val(v) \preceq \val'(v)$ holds
for all $v \in V$. 
\end{definition}

\begin{definition}[Positional Determinacy]
  We say that a strategy $\sigma \in \Sigma^\infty$  is memoryless or
  \emph{positional} if it only depends on the last state, i.e. for all 
  $\pi, \pi' \in V^*$ and $v \in V_\mMAX$ we have that $\sigma(\pi v) = \sigma(\pi' v)$.  
  Thus, a positional strategy can be viewed as a function $\sigma: V_\mMAX \to V$
  such that for all $v\in V_\mMAX$ we have that $(v, \sigma(v)) \in E$.
  The concept of positional strategies of \Pmin\ is defined in an analogous
  manner. 
  We write $\Sigma$ and $T$ for the set of positional strategies of Players Max
  and Min, respectively. 
  We say that a game is positionally determined if: 
  \begin{itemize}
  \item $\val_{\mathcal G}(v,\sigma) = \min_{\tau \in T} \val_{\mathcal
    G}(v,\sigma,\tau)$ holds for all $\sigma \in \Sigma$, 
    
  \item $\val_{\mathcal G}(v,\tau) = \max_{\sigma \in \Sigma} \val_{\mathcal
    G}(v,\sigma,\tau)$ holds for all $\tau \in T$, 
    
  \item {\bf Existence of value}: for all $v \in V$ $\max_{\sigma \in \Sigma} \val_{\mathcal G}(v,\sigma) =
    \min_{\tau \in T} \val_{\mathcal G}(v,\tau)$ holds, and we use $\val_{\mathcal
      G}(v)$ to denote this value, and 
    
  \item {\bf Existence of positional optimal strategies}:  there is a pair $\tau_{\min},\sigma_{\max}$ of strategies such that, for
    all $v \in V$, $\val_{\mathcal G}(v) = \val_{\mathcal G}(v,\sigma_{\max}) =
    \val_{\mathcal G}(v,\tau_{\min})$ holds. 
    Observe that for all $\sigma \in \Sigma$ and $\tau \in T$ we have that 
    $\val_{\mathcal
        G}(\sigma_{\max}) \sqsupseteq \val_{\mathcal G}(\sigma)$
    and $\val_{\mathcal G}(\tau_{\min}) \sqsubseteq \val_{\mathcal G}(\tau)$.
  \end{itemize}
\end{definition}

Observe that (first and second item above) that classes of games with positional
strategies guarantee an optimal positional counter strategy for \Pmin\ to all
strategies $\sigma \in \Sigma$ of \Pmax. 
We denote these strategies by $\tau^c_\sigma$. 
Similarly, we denote the optimal positional counter strategy for \Pmax\ to a
strategy $\tau \in T$ by $\sigma^c_\tau$ of \Pmin. 
While this counter strategy is not necessarily unique, we use the {\bf
  convention} in all proofs that $\tau^c_\sigma$ is always the same counter
strategy for $\sigma \in \Sigma$, and $\sigma^c_\tau$ is always the same counter
strategy for $\tau \in T$.  
 
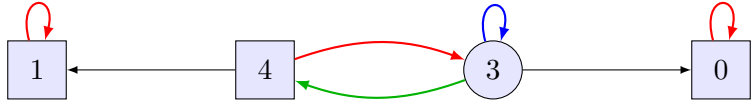
\begin{figure}[t]
  \begin{center}
    \scalebox{1}{
      \begin{tikzpicture}
        \tikzstyle{eloc}=[fill=blue!10!white,draw,circle,minimum size=2em,inner sep=0em]
        \tikzstyle{oloc}=[fill=blue!10!white,draw,minimum size=2em,inner sep=0em]
        
        \tikzstyle{trans}=[-latex, rounded corners]

        \node[oloc] at (0,0) (1)  {$1$};
        \node[oloc] at (3,0) (4)  {$4$};
        \node[eloc] at (6,0) (3)  {$3$};
        \node[oloc] at (9,0) (0)  {$0$};
        
        \draw[trans] (4) -- (1);
        \draw[trans] (3) -- (0);
        
        \draw[trans,color=red,thick] (4) to [bend left=20] (3);
        \draw[trans,color=green!70!black,thick] (3) to [bend left=20] (4);
        
        \draw[trans,color=red,thick] (1) to [loop above]  (1);
        \draw[trans,color=blue,thick] (3) to [loop above]  (3);
        \draw[trans,color=red,thick] (0) to [loop above]  (0);
      \end{tikzpicture}
    }
  \end{center}
  \caption{Parity game arena with four vertices and unique colours.}
  \vspace{-1em}
  \label{fig:example}
\end{figure}

\begin{example}
 Consider the parity game arena shown in Figure~\ref{fig:example}.
  We use circles for the vertices of Player Max and squares for Player Min. 
  We label each vertex with its colour.
  Notice that a positional strategy can be depicted just by specifying an
  outgoing edge for all the vertices of a player.
  The positional strategies $\sigma$ of \PZero{} is depicted in {\color{blue}
    blue} and the positional strategy $\tau$ of \POne{} is depicted in
  {\color{red} red}. 
  In the example, $\val(1,\sigma,\tau) = (1,\emptyset,0)$, $\val(4,\sigma,\tau) =
  (3,\{4\},1)$, $\val(3,\sigma,\tau) = (3,\emptyset,0)$, and $\val(0,\sigma,\tau)
  = (0,\emptyset,0)$. 
\end{example}

\subsection{Classic Strategy Improvement Algorithm}
As discussed in the introduction, classic strategy improvement algorithms work
well for classes of games that are positionally determined. 
Moreover, the evaluation function should be such that one can easily identify
the set $\prof(\sigma)$ of profitable updates and reach an optimum exactly where
there are no profitable updates.  
We formalise these prerequisites for a class of games to be good for strategy
improvement algorithm in this section. 




\begin{definition}[Profitable Updates]
  For a strategy $\sigma \in \Sigma$, an edge $(v, v') \in E$ with $v \in V_\mMAX$
  is a profitable update if $\sigma' \in \Sigma$ with $\sigma': v \mapsto v'$ and
  $\sigma': v'' \mapsto \sigma(v'')$ for all $v'' \neq v$ has a strictly greater
  evaluation than $\sigma$, $\val_{\mathcal G}(\sigma') \sqsupset \val_{\mathcal
    G}(\sigma)$. 
  We write $\prof(\sigma)$ for the set of profitable updates. 
\end{definition} 
\begin{example}
In our example from Figure \ref{fig:example}, $\tau = \tau_\sigma^c$ is the
optimal counter strategy to $\sigma$, such that $\val(\sigma) =
\val(\sigma,\tau)$. 
$\prof(\sigma) = \{(3,4),(3,0)\}$, because both the successor to the left and
the successor to the right have a better valuation, $(3,\{4\},1)$ and
$(0,\emptyset,0)$, respectively, than the successor on the selected self-loop,
$(3,\emptyset,0)$. 
\end{example}

For a strategy $\sigma$ and a functional (right-unique) subsets 
$P \subseteq \prof(\sigma)$ we define the strategy $\sigma^P$ with $\sigma^P: v
\mapsto v'$ if $(v,v') \in P$ and $\sigma^P: v \mapsto \sigma(v)$ if there is no
$v' \in V$ with $(v,v') \in P$.
For a class of graph games, profitable updates are \emph{combinable} if, for all
strategies $\sigma$ and all functional (right-unique) subsets $P \subseteq
\prof(\sigma)$ we have that 
$\val_{\mathcal G}(\sigma^P) \sqsupset \val_{\mathcal G}(\sigma)$. 
Moreover, we say that a class of graph games is \emph{maximum identifying} if
$\prof(\sigma) = \emptyset  \Leftrightarrow \val_{\mathcal G}(\sigma) =
\val_{\mathcal G}$.
Algorithm~\ref{alg:classic-sia-conc} provides a generic template for strategy
improvement algorithms. 
\begin{algorithm}[h]
  \caption{\label{alg:classic-sia-conc} Classic strategy improvement algorithm}
  Let $\sigma_0$ be an arbitrary positional strategy. {\bf Set} $i :=0$.\\
  If $\prof(\sigma_i) = \emptyset$ \Return $\sigma_i$\\ 
  $\sigma_{i+1} := {\sigma_i}^P$ for some functional subset $P\subseteq
  \prof(\sigma)$. {\bf Set} $i := i + 1$. {\bf go to} 2.\\
\end{algorithm}

We say that a class of games is \emph{good for $\max$ strategy improvement} if
they are positionally determined and have combinable and maximum identifying
improvements. 
\begin{theorem}
\label{theorem:classic}
If a class of games is good for $\max$ strategy improvement then
Algorithm~\ref{alg:classic-sia-conc} terminates with an optimal strategy $\sigma$ 
($\val_{\mathcal G}(\sigma) = \val_{\mathcal G}$) for \Pmax.
\end{theorem}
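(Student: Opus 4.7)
The plan is to combine the three hypotheses in the definition of ``good for $\max$ strategy improvement'' almost mechanically: positional determinacy ensures that restricting attention to $\Sigma$ is harmless; combinability drives a strict monotone increase of the valuation at every iteration; and the maximum-identifying property certifies optimality at the termination condition.

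First I would establish \textbf{termination}. The algorithm only ever exits line~3 when $\prof(\sigma_i)\neq \emptyset$, and any singleton $P=\{(v,v')\}\subseteq \prof(\sigma_i)$ is a functional subset, so line~3 is always executable. By the combinability assumption, $\val_{\mathcal G}(\sigma_{i+1}) = \val_{\mathcal G}(\sigma_i^P) \sqsupset \val_{\mathcal G}(\sigma_i)$ for every iteration, i.e.\ the sequence of valuations is strictly increasing in the partial order $\sqsubseteq$. Consequently the sequence $\sigma_0,\sigma_1,\sigma_2,\dots$ contains no repetitions: if $\sigma_i=\sigma_j$ for some $i<j$, then $\val_{\mathcal G}(\sigma_i) = \val_{\mathcal G}(\sigma_j)$, contradicting strict monotonicity along $i<i+1<\dots<j$. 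Since $V$ is finite, $\Sigma$ is finite, so the sequence must terminate at some index $k$ with $\prof(\sigma_k)=\emptyset$.

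Next I would establish \textbf{optimality of the returned strategy}. When the algorithm returns $\sigma_k$, we have $\prof(\sigma_k)=\emptyset$. The maximum-identifying hypothesis gives the equivalence $\prof(\sigma_k)=\emptyset \Leftrightarrow \val_{\mathcal G}(\sigma_k)=\val_{\mathcal G}$, hence $\val_{\mathcal G}(\sigma_k)=\val_{\mathcal G}$, which is precisely the definition of $\sigma_k$ being optimal for \Pmax. Positional determinacy is what makes the equality $\val_{\mathcal G}(\sigma_k)=\val_{\mathcal G}$ mean the ``right'' thing, namely that $\sigma_k$ attains the game value against every counter strategy (not merely every positional one), via the first and third bullets of the positional-determinacy definition.

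There is no real technical obstacle here; the substance lies entirely in the definitions of combinable and maximum-identifying updates, which have been postulated as hypotheses. The only mild subtlety I would flag is that the increasing chain argument uses the $V$-dimensional order $\sqsubseteq$, so termination follows from finiteness of $\Sigma$ rather than from any well-foundedness of the value domain $\Dd$ itself; this is the step that implicitly requires the arena, and hence $\Sigma$, to be finite.
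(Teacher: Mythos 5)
Your proposal is correct and follows essentially the same route as the paper: combinability gives a strictly increasing chain of valuations, finiteness of the set of positional strategies forces termination, and the maximum-identifying property certifies optimality at the stopping condition. The extra details you supply (functionality of singleton updates, non-repetition of strategies, the role of positional determinacy in interpreting $\val_{\mathcal G}(\sigma_k)=\val_{\mathcal G}$) are consistent elaborations of the paper's shorter argument.
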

As a remark, we can drop the combinability requirement while maintaining
correctness when we restrict the updates to a single position, that is, when we
require $P$ to be singleton for every update. 
We call such strategy improvement algorithms \emph{slow}, and a class of games
\emph{good for slow $\max$ strategy improvement} if it is maximum identifying
and positionally determined. 

\begin{theorem}
\label{theorem:slow}
If a class of games is  positionally determined games with maximum identifying
improvement then  all slow strategy improvement algorithms terminate with an optimal
strategy $\sigma$ 
($\val_{\mathcal G}(\sigma) = \val_{\mathcal G}$) for \Pmax.
\end{theorem}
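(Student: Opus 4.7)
The plan is to follow the same skeleton as Theorem~\ref{theorem:classic} but to note that when the update set $P$ is restricted to be a singleton, the combinability requirement becomes vacuous, so it can simply be dropped from the hypotheses.

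First I would verify strict monotonicity of the sequence $\val_{\mathcal G}(\sigma_0), \val_{\mathcal G}(\sigma_1), \ldots$ produced by a slow strategy improvement algorithm. In the slow setting, $\sigma_{i+1} = \sigma_i^{\{(v,v')\}}$ for some single profitable edge $(v,v') \in \prof(\sigma_i)$. But the very definition of $\prof(\sigma_i)$ already stipulates that such a single-edge switch yields a strategy $\sigma'$ with $\val_{\mathcal G}(\sigma') \sqsupset \val_{\mathcal G}(\sigma_i)$. In other words, the unit case of combinability is built into the definition of a profitable update, so we get $\val_{\mathcal G}(\sigma_{i+1}) \sqsupset \val_{\mathcal G}(\sigma_i)$ for free, without invoking any combinability assumption on the class of games.

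Next I would invoke finiteness. Positional determinacy tells us we only need to search among positional strategies, of which there are only finitely many (since $V_{\mMAX}$ is finite and each vertex has finite out-degree). Strict increase in the partial order $\sqsubseteq$ prevents any positional strategy from being revisited, so the sequence $\sigma_0, \sigma_1, \ldots$ must be finite and the algorithm terminates with some $\sigma_k$ satisfying $\prof(\sigma_k) = \emptyset$.

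Finally I would apply the maximum identifying property: $\prof(\sigma_k) = \emptyset$ is equivalent to $\val_{\mathcal G}(\sigma_k) = \val_{\mathcal G}$, which is exactly the desired optimality. Positional determinacy then guarantees that this value equals $\max_{\sigma \in \Sigma} \val_{\mathcal G}(v,\sigma) = \min_{\tau \in T} \val_{\mathcal G}(v,\tau)$ at every vertex $v$, so $\sigma_k$ is optimal for \Pmax{} in the strong sense. There is no real obstacle here; the only subtle point worth emphasising is that dropping combinability is harmless precisely because the singleton case is already part of the definition of $\prof(\sigma)$, so the proof of Theorem~\ref{theorem:classic} specialises without modification to establish Theorem~\ref{theorem:slow}.
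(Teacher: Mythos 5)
Your proposal is correct and takes essentially the same route as the paper's own (shared) proof of Theorems~\ref{theorem:classic} and \ref{theorem:slow}: strict improvement of $\val_{\mathcal G}(\sigma_i)$ at every step, finiteness of the set of positional strategies to force termination, and the maximum identifying property to conclude optimality at the stopping condition. Your observation that the singleton case of combinability is already built into the definition of $\prof(\sigma)$ is precisely the paper's stated reason for dropping the combinability hypothesis in the slow setting.
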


The proof for both theorems is the same.

\begin{proof}
The strategy improvement algorithm will produce a sequence $\sigma_0, \sigma_1,
\sigma_2 \ldots$ of positional strategies with increasing quality $\val_{\mathcal
  G}(\sigma_0) \sqsubset \val_{\mathcal G}(\sigma_1) \sqsubset \val_{\mathcal
  G}(\sigma_2) \sqsubset \ldots$. As the set of positional strategies is finite,
this chain must be finite. 
As the game is maximum identifying, the stopping condition provides optimality.
\end{proof}

Various concepts and results extend naturally for analogous claims about
\Pmin.
We call a class of game \emph{good for strategy improvement} if it is good for $\max$ 
strategy improvement and good for $\min$ strategy improvement. 
Parity games, mean payoff games, and discounted payoff games are all good for
strategy improvement (for both players).
Moreover, the calculation of $\prof(\sigma)$ is cheap in all of these instances,
which makes them well suited for strategy improvement techniques.

\section{Symmetric Strategy Improvement Algorithm}
\label{sec:algo}
We first extend the termination argument for classic strategy improvement
techniques (Theorems \ref{theorem:classic} and \ref{theorem:slow}) to symmetric
strategy improvement given as Algorithm~\ref{alg:classic-ssia-conc}.

\begin{algorithm}[h]
  \caption{\label{alg:classic-ssia-conc} Symmetric strategy improvement algorithm}
  Let $\sigma_0$ and $\tau_0$  be arbitrary positional strategies. {\bf set} $i :=0$.\\
  Determine $\sigma_{\tau_i}^c$ and $\tau_{\sigma_i}^c$\\
  $\sigma_{i+1} := {\sigma_i}^P$ for $P\subseteq \prof(\sigma) \cap \sigma^c_{\tau_i}$.\\ 
  $\tau_{i+1} := {\tau_i}^P$ for $P\subseteq  \prof(\tau) \cap \tau^c_{\sigma_i}$. \\
  {\bf if} $\sigma_{i+1} = \sigma_i$ and $\tau_{i+1} = \tau_i$ {\bf return} $(\sigma_i, \tau_i)$.\\
  {\bf set} $i := i + 1$. {\bf go to} 2.\\
\end{algorithm}

\subsection{Correctness}
\begin{lemma}
The symmetric strategy improvement algorithm terminates for all classes of games that are good for strategy improvement.
\end{lemma}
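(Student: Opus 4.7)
The plan is to lift the monotonicity argument from Theorems~\ref{theorem:classic} and~\ref{theorem:slow} to the symmetric setting by tracking the valuations of the two players' strategies independently. The crucial observation is that $\val_{\mathcal G}(\sigma)$ and $\val_{\mathcal G}(\tau)$, as defined in Section~\ref{sec:prelims}, depend only on the respective strategy and not on the current partner strategy used inside the algorithm. So although the two update steps in Algorithm~\ref{alg:classic-ssia-conc} are interleaved, we may analyse the two sequences $\val_{\mathcal G}(\sigma_0), \val_{\mathcal G}(\sigma_1), \ldots$ and $\val_{\mathcal G}(\tau_0), \val_{\mathcal G}(\tau_1), \ldots$ separately.

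First I would verify that the update sets used in Algorithm~\ref{alg:classic-ssia-conc} are indeed legal improvements in the sense of Section~\ref{sec:prelims}. The set $\prof(\sigma_i) \cap \sigma^c_{\tau_i}$ is functional (right-unique), because $\sigma^c_{\tau_i}$ is itself a positional strategy, hence a function $V_\mMAX \to V$; any subset $P \subseteq \prof(\sigma_i) \cap \sigma^c_{\tau_i}$ inherits this property. Since the class of games is good for strategy improvement, profitable updates are combinable, and so whenever $P \neq \emptyset$ we obtain $\val_{\mathcal G}(\sigma_{i+1}) = \val_{\mathcal G}(\sigma_i^P) \sqsupset \val_{\mathcal G}(\sigma_i)$. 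The symmetric argument (for \Pmin) gives $\val_{\mathcal G}(\tau_{i+1}) \sqsubset \val_{\mathcal G}(\tau_i)$ whenever $\tau_i$ is updated with a non-empty $P$.

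Next I would argue termination by a simple rank function. The pair $\bigl(\val_{\mathcal G}(\sigma_i), \val_{\mathcal G}(\tau_i)\bigr)$ moves monotonically in the product order where the first component is $\sqsubseteq$-non-decreasing and the second is $\sqsubseteq$-non-increasing; by the previous step, in every iteration in which the algorithm does not yet meet its stopping condition $\sigma_{i+1}=\sigma_i \wedge \tau_{i+1}=\tau_i$, at least one of the two components changes strictly. Define
\[
  r(i) \;\rmdef\; \bigl|\{\sigma \in \Sigma : \val_{\mathcal G}(\sigma) \sqsupseteq \val_{\mathcal G}(\sigma_i)\}\bigr| \;+\; \bigl|\{\tau \in T : \val_{\mathcal G}(\tau) \sqsubseteq \val_{\mathcal G}(\tau_i)\}\bigr|.
\]
Because the sets $\Sigma$ and $T$ of positional strategies are finite and strict improvement of a valuation in either coordinate strictly decreases the corresponding cardinality, $r$ is a strictly decreasing, non-negative integer-valued function on non-terminating iterations. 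Hence the algorithm must terminate.

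The only part that requires any care is the easy but essential observation that the intersection with $\sigma^c_{\tau_i}$ (respectively $\tau^c_{\sigma_i}$) preserves functionality, which is what lets us invoke combinability. Once that is noted, termination is a direct two-sided application of the classic termination argument; no symmetry-breaking or fairness assumption on the switching rule is needed beyond the algorithm's own stopping condition. Note that this lemma claims termination only; the question of whether the returned strategy pair is optimal is a separate issue, to be addressed in the correctness argument that follows.
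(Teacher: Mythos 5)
Your proposal is correct and follows essentially the same route as the paper: both arguments rest on the observation that each non-terminating iteration strictly improves $\val_{\mathcal G}(\sigma_i)$ or strictly improves (for \Pmin) $\val_{\mathcal G}(\tau_i)$, and that finiteness of the set of positional strategies then forces both sequences to stabilise. Your explicit check that $\prof(\sigma_i)\cap\sigma^c_{\tau_i}$ is functional (so combinability applies) and your rank function merely make precise what the paper leaves implicit.
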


\begin{proof}
We first observe that the algorithm yields a sequence $\sigma_0, \sigma_1,
\sigma_2, \ldots$ of \Pmax\ strategies for $\mathcal G$ with improving values $\val_{\mathcal G}(\sigma_0) \sqsubseteq \val_{\mathcal G}(\sigma_1) \sqsubseteq \val_{\mathcal G}(\sigma_2) \sqsubseteq \ldots$, where equality, $\val_{\mathcal G}(\sigma_i) \equiv \val_{\mathcal G}(\sigma_{i+i})$, implies $\sigma_i = \sigma_{i+1}$.
Similarly, for the sequence $\tau_0, \tau_1, \tau_2, \ldots$ of \Pmin\ strategies for $\mathcal G$, the values $\val_{\mathcal G}(\tau_0) \sqsupseteq \val_{\mathcal G}(\tau_1) \sqsupseteq \val_{\mathcal G}(\tau_2) \sqsupseteq \ldots$, improve (for \Pmin), such that equality, $\val_{\mathcal G}(\tau_i) \equiv \val_{\mathcal G}(\tau_{i+i})$, implies $\tau_i = \tau_{i+1}$.
As the number of values that can be taken is finite, eventually both values stabilise and the algorithm terminates.
\end{proof}

What remains to be shown is that the symmetric strategy improvement algorithm cannot terminate with an incorrect result. In order to show this, we first prove the weaker claim that it is optimal in $\mathcal G(\sigma,\tau,\sigma^c_\tau,\tau^c_\sigma) = (V_{\max},V_{\min},E',\val)$ such that
$E' = \big\{ \big(v,\sigma(v)\big) \mid v \in V_{\max}\big\}
\cup \big\{ \big(v,\tau(v)\big) \mid v \in V_{\min}\big\}
\cup \big\{ \big(v,\sigma^c_\tau(v)\big) \mid v \in V_{\max}\big\}
\cup \big\{ \big(v,\tau^c_\sigma(v)\big) \mid v \in V_{\min}\big\}$ is the subgame of $\mathcal G$ whose edges are those defined by the four positional strategies,
when it terminates with the strategy pair $\sigma,\tau$.

\begin{lemma}
\label{lemma:local}
When the symmetric strategy improvement algorithm terminates with the strategy
pair $\sigma,\tau$ on games that are good for strategy improvement, then
$\sigma$ and $\tau$ are the optimal strategies for Players Max and Min, respectively, in $\mathcal G(\sigma,\tau,\sigma^c_\tau,\tau^c_\sigma)$.
\end{lemma}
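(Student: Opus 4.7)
My plan is to reduce the statement to the correctness of classical strategy improvement (Theorem~\ref{theorem:classic}) applied to the restricted subgame $\mathcal G' := \mathcal G(\sigma,\tau,\sigma^c_\tau,\tau^c_\sigma)$ itself. First, I would note that $\mathcal G'$ retains the vertex partition of $\mathcal G$ and every vertex still has at least one outgoing edge (either $\sigma(v)$ or $\tau(v)$), so it is itself a game of the same class and in particular good for strategy improvement. The goal then reduces to showing that $\prof_{\mathcal G'}(\sigma) = \emptyset$: by maximum identification in $\mathcal G'$, this forces $\sigma$ to be Max-optimal in $\mathcal G'$, and a symmetric argument will establish Min-optimality of $\tau$.

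The main step is a value identification followed by a profitability transfer. Because $\tau^c_\sigma$ is Min's optimal counter to $\sigma$ over all of $T$, and $T' \subseteq T$ still contains it, $\tau^c_\sigma$ remains optimal against $\sigma$ in $\mathcal G'$; hence $\val_{\mathcal G'}(\sigma) = \val(\sigma,\tau^c_\sigma) = \val_{\mathcal G}(\sigma)$, and analogously $\val_{\mathcal G'}(\tau) = \val_{\mathcal G}(\tau)$. In $\mathcal G'$ the only non-trivial single-vertex alternative Max has at $v \in V_{\max}$ to $\sigma(v)$ is $\sigma^c_\tau(v)$. Suppose such a switch, producing $\sigma'$, lies in $\prof_{\mathcal G'}(\sigma)$, so that $\val_{\mathcal G'}(\sigma') \sqsupset \val_{\mathcal G'}(\sigma)$. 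Combining this with the bound $\val_{\mathcal G'}(\sigma') \sqsubseteq \val(\sigma',\tau^c_\sigma)$ (because $\tau^c_\sigma \in T'$) and the value identification, one obtains $\val(\sigma',\tau^c_\sigma) \sqsupset \val(\sigma,\tau^c_\sigma)$. For games good for strategy improvement, this strict improvement against the fixed optimal counter is precisely the characterisation of a single-switch element of $\prof_{\mathcal G}(\sigma)$, so $(v,\sigma^c_\tau(v)) \in \prof_{\mathcal G}(\sigma) \cap \sigma^c_\tau$, contradicting the termination condition of Algorithm~\ref{alg:classic-ssia-conc}.

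The conclusion then drops out: since $\prof_{\mathcal G'}(\sigma) = \emptyset$, maximum identification in $\mathcal G'$ gives $\val_{\mathcal G'}(\sigma) = \val_{\mathcal G'}$ and Max-optimality of $\sigma$ in $\mathcal G'$; exchanging the roles of the two players and of the two counter-strategies shows that $\tau$ is Min-optimal in $\mathcal G'$. The delicate step is the profitability transfer in the middle paragraph: it relies on the standard local-versus-global equivalence for $\prof$ in parity, mean-payoff, and discounted-payoff games---that a single-vertex switch strictly improving the valuation against the player's current optimal counter is globally profitable---which is folklore for the classes considered here but would have to be spelled out from the ``good for strategy improvement'' axioms if one insisted on a fully axiomatic proof.
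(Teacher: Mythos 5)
Your proof is correct and follows essentially the same route as the paper's: reduce the claim to the emptiness of the profitable updates of $\sigma$ (resp.\ $\tau$) inside the subgame $\mathcal G(\sigma,\tau,\sigma^c_\tau,\tau^c_\sigma)$, and conclude by maximum (resp.\ minimum) identification applied to that subgame. The only difference is that you make explicit---and correctly flag as resting on the standard single-switch characterisation of profitability against the fixed optimal counter strategy, rather than on the paper's abstract ``good for strategy improvement'' axioms---the transfer of profitability between the subgame and $\mathcal G$, a step the paper's own proof compresses into the unannotated identity $\prof(\sigma)=\prof(\sigma)\cap\sigma^c_\tau$.
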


\begin{proof}
For $\mathcal G(\sigma,\tau,\sigma^c_\tau,\tau^c_\sigma)$, both update steps are not restricted:
the changes \Pmax\ can potentially select his updates from are the edges defined by $\sigma^c_\tau$ at the vertices $v\in V_{\max}$ where $\sigma$ and $\sigma^c_\tau$ differ ($\sigma(v) \neq \sigma^c_\tau(v)$).
Consequently, $\prof(\sigma) = \prof(\sigma)\cap \sigma^c_\tau$.

Thus, $\sigma=\sigma'$ holds if, and only if, $\sigma$ is the result of an
update step when using classic strategy improvement in $\mathcal
G(\sigma,\tau,\sigma^c_\tau,\tau^c_\sigma)$ when starting in $\sigma$. As game
is maximum identifying, $\sigma$ is the optimal \Pmax\ strategy for $\mathcal G(\sigma,\tau,\sigma^c_\tau,\tau^c_\sigma)$.

Likewise, the \Pmin\ can potentially select every updates from $\tau^c_\sigma$, at vertices $v\in V_{\min}$ and we first get $\prof(\tau) = \prof(\tau)\cap \tau^c_\sigma$ with the same argument.
As the game is minimum identifying, $\tau$ is the optimal \Pmin\ strategy for $\mathcal G(\sigma,\tau,\sigma^c_\tau,\tau^c_\sigma)$.
\end{proof}

We are now in a position to expand the optimality in the subgame $\mathcal
G(\sigma,\tau,\sigma^c_\tau,\tau^c_\sigma)$ from Lemma \ref{lemma:local} to
global optimality the valuation of these strategies for $\mathcal G$. 

\begin{lemma}
\label{lemma:evaluation}
When the symmetric strategy improvement algorithm terminates with the strategy
pair $\sigma,\tau$ on a game $\mathcal G$ that is good for strategy improvement,
then $\sigma$ is an optimal \Pmax\ strategy and $\tau$ an optimal \Pmin\ strategy. 
\end{lemma}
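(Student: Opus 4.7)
The plan is to lift the local optimality granted by Lemma~\ref{lemma:local} (in the subgame $\mathcal{G}':=\mathcal{G}(\sigma,\tau,\sigma^c_\tau,\tau^c_\sigma)$) to global optimality in $\mathcal{G}$. The key observation is that values of positional strategy pairs are invariant under passage to the subgame: whenever $\sigma'\in\Sigma$ and $\tau'\in T$ only use edges present in $\mathcal{G}'$, the play $\pi(v,\sigma',\tau')$ is literally the same sequence of vertices in $\mathcal{G}'$ and in $\mathcal{G}$, hence $\val_{\mathcal{G}'}(v,\sigma',\tau')=\val_{\mathcal{G}}(v,\sigma',\tau')$. By construction, both $(\sigma,\tau)$ and the ``mixed'' pairs $(\sigma^c_\tau,\tau)$ and $(\sigma,\tau^c_\sigma)$ satisfy this.

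With that in hand, I would argue as follows. By Lemma~\ref{lemma:local}, $\sigma$ is an optimal \Pmax\ strategy in $\mathcal{G}'$, and since $\sigma^c_\tau$ is itself a legal \Pmax\ strategy in $\mathcal{G}'$, we obtain $\val_{\mathcal{G}'}(\sigma,\tau)\sqsupseteq\val_{\mathcal{G}'}(\sigma^c_\tau,\tau)$. Transporting this to $\mathcal{G}$ via the observation above yields $\val_{\mathcal{G}}(\sigma,\tau)\sqsupseteq\val_{\mathcal{G}}(\sigma^c_\tau,\tau)=\val_{\mathcal{G}}(\tau)$, where the last equality uses that $\sigma^c_\tau$ is a best \Pmax\ response to $\tau$ in the full game. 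On the other hand, $\val_{\mathcal{G}}(\tau)=\sup_{\sigma'}\val_{\mathcal{G}}(\sigma',\tau)\sqsupseteq\val_{\mathcal{G}}(\sigma,\tau)$, so these two values coincide. A symmetric argument, using that $\tau$ is optimal for \Pmin\ in $\mathcal{G}'$ and that $\tau^c_\sigma$ is a best \Pmin\ response to $\sigma$ in $\mathcal{G}$, gives $\val_{\mathcal{G}}(\sigma,\tau)=\val_{\mathcal{G}}(\sigma)$.

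Combining the two equalities, $\val_{\mathcal{G}}(\sigma)=\val_{\mathcal{G}}(\sigma,\tau)=\val_{\mathcal{G}}(\tau)$. Since the game is positionally determined, we also have $\val_{\mathcal{G}}(\sigma)\sqsubseteq\val_{\mathcal{G}}\sqsubseteq\val_{\mathcal{G}}(\tau)$, forcing $\val_{\mathcal{G}}(\sigma)=\val_{\mathcal{G}}=\val_{\mathcal{G}}(\tau)$; thus $\sigma$ and $\tau$ are globally optimal. The only delicate point I expect is making precise the ``values are invariant'' step, i.e., ensuring that $\sigma^c_\tau$ and $\tau^c_\sigma$ are genuinely available as strategies inside $\mathcal{G}'$, but this is immediate from the definition of $\mathcal{G}'$ which explicitly includes their edges. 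Everything else is a sandwich between the subgame optimality of Lemma~\ref{lemma:local} and the global optimality of counter strategies guaranteed by positional determinacy.
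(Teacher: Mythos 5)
Your proof is correct and follows essentially the same route as the paper: local optimality in the subgame $\mathcal{G}'=\mathcal{G}(\sigma,\tau,\sigma^c_\tau,\tau^c_\sigma)$ from Lemma~\ref{lemma:local}, transfer of values between $\mathcal{G}'$ and $\mathcal{G}$ because $\sigma^c_\tau$ and $\tau^c_\sigma$ are available in $\mathcal{G}'$, and a sandwich against $\val_{\mathcal G}$. One small repair: the inequality $\val_{\mathcal{G}'}(\sigma,\tau)\sqsupseteq\val_{\mathcal{G}'}(\sigma^c_\tau,\tau)$ does not follow from the optimality of $\sigma$ in $\mathcal{G}'$ alone --- an optimal strategy need not dominate an arbitrary strategy against a \emph{fixed} (possibly suboptimal) opponent --- you also need the optimality of $\tau$ in $\mathcal{G}'$, which gives $\val_{\mathcal{G}'}(\sigma,\tau)\sqsupseteq\val_{\mathcal{G}'}(\sigma)=\val_{\mathcal{G}'}=\val_{\mathcal{G}'}(\tau)\sqsupseteq\val_{\mathcal{G}'}(\sigma^c_\tau,\tau)$; since Lemma~\ref{lemma:local} supplies both optimalities (and you invoke $\tau$'s in the symmetric step anyway), this is a matter of citing the right fact rather than a gap.
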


\begin{proof}
Let $\sigma,\tau$ be the strategies returned by the symmetric strategy improvement algorithm for a game $\mathcal G$, and let $\mathcal L = \mathcal G(\sigma,\tau,\sigma^c_\tau,\tau^c_\sigma)$ denote the local game from Lemma \ref{lemma:local} defined by them.
Lemma \ref{lemma:local} has established optimality in $\mathcal L$.
Observing that the optimal responses in $\mathcal G$ to $\sigma$ and $\tau$, $\tau^c_\sigma$ and $\sigma^c_\tau$, respectively, are available in $\mathcal L$, we first see that they are also optimal in $\mathcal L$.
Thus, we have
\begin{itemize}
\item $\val_{\mathcal L}(\sigma) \equiv \val_{\mathcal L}(\sigma,\tau^c_\sigma) \equiv \val_{\mathcal G}(\sigma,\tau^c_\sigma)$ and
\item  $\val_{\mathcal L}(\tau) \equiv \val_{\mathcal L}(\sigma^c_\tau,\tau) \equiv \val_{\mathcal G}(\sigma^c_\tau,\tau)$.
\end{itemize}
Optimality in $\mathcal L$ then provides $\val_{\mathcal L}(\sigma) = \val_{\mathcal L}(\tau)$. Putting these three equations together, we get $\val_{\mathcal G}(\sigma,\tau^c_\sigma) \equiv \val_{\mathcal G}(\sigma^c_\tau,\tau)$.

Taking into account that $\tau^c_\sigma$ and $\sigma^c_\tau$ are the optimal responses to $\sigma$ and $\tau$, respectively, in $\mathcal G$, we expand this to $\val_{\mathcal G} \sqsupseteq \val_{\mathcal G}(\sigma) \equiv \val_{\mathcal G}(\sigma,\tau^c_\sigma) \equiv \val_{\mathcal G}(\sigma^c_\tau,\tau) \equiv \val_{\mathcal G}(\tau) \sqsupseteq \val_{\mathcal G}$ and get 
$\val_{\mathcal G} \equiv \val_{\mathcal G}(\sigma) \equiv \val_{\mathcal G}(\tau) \equiv \val_{\mathcal G}(\sigma,\tau)$.
\end{proof}

The Lemmas in this subsection yield the following results.
\begin{theorem}
\label{theorem:correct}
The symmetric strategy improvement algorithm is correct for games that are good for strategy improvement.
\end{theorem}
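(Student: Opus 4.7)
The plan is essentially to observe that Theorem \ref{theorem:correct} is an immediate consequence of the two lemmas already proved in this subsection, so no new argument is needed; the proof reduces to spelling out the conjunction. Correctness of a strategy improvement scheme on games good for strategy improvement means two things: (a) the algorithm halts on every input, and (b) whatever it returns is in fact a pair of optimal strategies. Both facts are already in hand.

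First I would invoke the termination lemma of this subsection. Because the class of games is good for strategy improvement (hence both maximum identifying and minimum identifying, and admitting combinable profitable updates in both directions), the sequences $\val_{\mathcal G}(\sigma_i)$ and $\val_{\mathcal G}(\tau_i)$ are monotone in the partial order $\sqsubseteq$ and take values in a finite set of positional strategy valuations, so they stabilise. Per the termination lemma, stabilisation of these values forces $\sigma_{i+1}=\sigma_i$ and $\tau_{i+1}=\tau_i$, which is exactly the halting condition of Algorithm~\ref{alg:classic-ssia-conc}. Hence the algorithm terminates, producing some pair $(\sigma,\tau)$.

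Next I would apply Lemma \ref{lemma:evaluation} to this returned pair $(\sigma,\tau)$: it asserts directly that $\sigma$ is an optimal \Pmax\ strategy and $\tau$ is an optimal \Pmin\ strategy in $\mathcal G$, i.e.\ $\val_{\mathcal G}(\sigma) \equiv \val_{\mathcal G}(\tau) \equiv \val_{\mathcal G}$. Together with termination this is the full content of correctness.

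There is no real obstacle here; the hard work has been absorbed into Lemmas \ref{lemma:local} and \ref{lemma:evaluation}. The only thing I would be careful about is making explicit that ``correctness'' in this statement refers simultaneously to termination and to optimality of the output (for both players), so the proof should cite both the termination lemma and Lemma \ref{lemma:evaluation} rather than just one of them.
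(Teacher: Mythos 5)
Your proposal is correct and matches the paper exactly: the paper states that ``the Lemmas in this subsection yield'' the theorem, i.e.\ termination follows from the first (termination) lemma and optimality of the returned pair from Lemma~\ref{lemma:evaluation}, which is precisely your decomposition.
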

\begin{theorem}
\label{theorem:correctAndSlow}
The slow symmetric strategy improvement algorithm is correct for positionally determined games that are maximum and minimum identifying.
\end{theorem}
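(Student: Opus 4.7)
The plan is to re-run the correctness argument for Theorem~\ref{theorem:correct} (Lemmas~\ref{lemma:local} and~\ref{lemma:evaluation}) step-by-step and, at each place, check that combinability is not actually used, so that appeals to Theorem~\ref{theorem:classic} may be replaced by appeals to Theorem~\ref{theorem:slow}. The hypotheses (positional determinacy, maximum identifying, minimum identifying) are exactly what Theorem~\ref{theorem:slow} and its dual for \Pmin\ require.

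First I would re-establish termination: the sequences $\val_{\mathcal G}(\sigma_0) \sqsubseteq \val_{\mathcal G}(\sigma_1) \sqsubseteq \ldots$ and $\val_{\mathcal G}(\tau_0) \sqsupseteq \val_{\mathcal G}(\tau_1) \sqsupseteq \ldots$ stay monotone because every single-position update taken from $\prof(\sigma_i) \cap \sigma^c_{\tau_i}$ (resp.\ $\prof(\tau_i) \cap \tau^c_{\sigma_i}$) is, by definition of $\prof$, a strict improvement, and this requires no combinability. The finiteness of the valuation space then forces stabilisation, at which point both $\sigma_{i+1}=\sigma_i$ and $\tau_{i+1}=\tau_i$ and the algorithm returns.

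Next I would rerun Lemma~\ref{lemma:local} in the slow regime. In the local subgame $\mathcal L = \mathcal G(\sigma,\tau,\sigma^c_\tau,\tau^c_\sigma)$, the only non-$\sigma$ successors available to \Pmax\ at a vertex $v$ are those selected by $\sigma^c_\tau$, so $\prof_{\mathcal L}(\sigma) = \prof_{\mathcal G}(\sigma) \cap \sigma^c_\tau$; dually for \Pmin. Termination of the slow symmetric algorithm thus means that no singleton profitable update exists in $\mathcal L$ for either player. Since $\mathcal L$ inherits positional determinacy and both identifying properties from $\mathcal G$ (it is a subgame in which both optimal counter strategies are still available), Theorem~\ref{theorem:slow} applied in $\mathcal L$ to \Pmax, and its \Pmin-dual, give that $\sigma$ and $\tau$ are the optimal positional strategies in $\mathcal L$.

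Finally, the proof of Lemma~\ref{lemma:evaluation} is agnostic about \emph{how} local optimality in $\mathcal L$ was obtained: it only uses that $\sigma^c_\tau$ and $\tau^c_\sigma$ lie in $\mathcal L$ and that $\sigma,\tau$ are optimal there, chaining $\val_{\mathcal L}(\sigma)\equiv\val_{\mathcal G}(\sigma,\tau^c_\sigma)$ with its dual to collapse the global valuations. I would therefore invoke Lemma~\ref{lemma:evaluation} verbatim to conclude $\val_{\mathcal G}\equiv\val_{\mathcal G}(\sigma)\equiv\val_{\mathcal G}(\tau)\equiv\val_{\mathcal G}(\sigma,\tau)$, which is the claim. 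The only delicate point — and the main obstacle — is auditing the proofs of the two lemmas to confirm that combinability is invoked nowhere except implicitly through Theorem~\ref{theorem:classic}, so that the substitution by Theorem~\ref{theorem:slow} is sound; once this audit is done the result is immediate.
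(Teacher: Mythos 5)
Your proposal is correct and follows essentially the same route as the paper, which derives Theorem~\ref{theorem:correctAndSlow} from the very same lemmas (termination, Lemma~\ref{lemma:local}, Lemma~\ref{lemma:evaluation}) together with the earlier remark that singleton updates render combinability unnecessary. Your explicit audit that each step uses only positional determinacy and the identifying properties is exactly the (unstated) content of the paper's one-line justification.
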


We implemented our symmetric strategy improvement algorithm based on the
progress measures introduced by V\"oge and
Jurdzi\'nski \cite{Voge+Jurdzinski/00/simprove}. 
The first step is to determine the valuation for the optimal counter strategies
to and the valuations for $\sigma$ and $\tau$. 

\begin{example} 
In our running example from Figure \ref{fig:example}, we have discussed in the
previous section that $\tau$ is the optimal counter strategy $\tau^c_\sigma$ and
that $\prof(\sigma) = \{(3,4),(3,0)\}$. 
In the optimal counter strategy $\sigma^c_\tau$ to $\tau$, \PZero{} moves from
$3$ to $4$, and we get $\val(1,\tau) = (1,\emptyset,0)$, $\val(4,\tau) =
(4,\emptyset,0)$, $\val(3,\tau) = (4,\emptyset,1)$, and $\val(0,\tau) =
(0,\emptyset,0)$. 
Consequently, $\prof(\tau) = \{(4,1)\}$.
For the update of $\sigma$, we select the intersection of $\prof(\sigma)$ and $\sigma^c_\tau$.
In our example, this is the edge from $3$ to $4$ (depicted in {\color{green!70!black} green}).
To update $\tau$, we select the intersection of $\prof(\tau)$ and $\tau^c_\sigma$.
In our example, this intersection is empty, as the current strategy $\tau$ agrees with $\tau^c_\sigma$.
\end{example}

\subsection{A minor improvement on stopping criteria} 
In this subsection, we look at a minor albeit natural improvement over
Algorithm~\ref{alg:classic-ssia-conc} shown in
Algorithm~\ref{alg:classic-ssia-conc-alt}. 
There we used termination on both sides as a condition to terminate the algorithm.
We could alternatively check if \emph{either} player has reached an optimum.
Once this is the case, we can return the optimal strategy and an optimal counter
strategy to it. 
\begin{algorithm}[h]
  \caption{\label{alg:classic-ssia-conc-alt} Symmetric strategy improvement
    algorithm (Improved Stopping criteria)}
  Let $\sigma_0$ and $\tau_0$  be arbitrary positional strategies. {\bf set} $i :=0$.\\
  Determine $\sigma_{\tau_i}^c$ and $\tau_{\sigma_i}^c$\\
  {\bf if}  $\prof(\sigma_i) = \emptyset$ {\bf return} $(\sigma_i, \tau_{\sigma_i}^c)$;\\
  {\bf if}  $\prof(\tau_i) = \emptyset$ {\bf return} $(\sigma_{\tau_i}^c, \tau_i)$;\\
  $\sigma_{i+1} := {\sigma_i}^P$ for $P\subseteq \prof(\sigma) \cap \sigma^c_{\tau_i}$.\\ 
  $\tau_{i+1} := {\tau_i}^P$ for $P\subseteq  \prof(\tau) \cap \tau^c_{\sigma_i}$. \\
  {\bf set} $i := i + 1$. {\bf go to} 2.\\
\end{algorithm}

The correctness of this stopping condition is provided by Theorems
\ref{theorem:classic} and \ref{theorem:slow}, and checking this stopping
condition is usually cheap: it suffices to check if $\prof(\sigma)$ or
$\prof(\tau)$ is empty. 
This provides us with a small optimisation, as we can stop as soon as one of
the strategies involved is optimal. 
However this small optimisation can only provide a small advantage.

\begin{theorem}
\label{theorem:optlin}
The difference in the number of iterations of
Algorithm~\ref{alg:classic-ssia-conc} and
Algorithm~\ref{alg:classic-ssia-conc-alt} is at most linear in the number of states of $\mathcal G$.
\end{theorem}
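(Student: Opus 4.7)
The plan is to note that, by Lemma~\ref{lemma:evaluation}, Algorithm~\ref{alg:classic-ssia-conc} only halts when the returned $\sigma,\tau$ are \emph{globally} optimal -- and hence satisfy both $\prof(\sigma) = \emptyset$ and $\prof(\tau) = \emptyset$ -- whereas Algorithm~\ref{alg:classic-ssia-conc-alt} stops as soon as \emph{one} of these sets becomes empty. Running both algorithms on the same initial strategies and with the same switching rule therefore produces the same sequence $(\sigma_i,\tau_i)$ until Algorithm~\ref{alg:classic-ssia-conc-alt} returns, and what needs to be bounded is the number of extra ``tail'' iterations Algorithm~\ref{alg:classic-ssia-conc} performs after Algorithm~\ref{alg:classic-ssia-conc-alt} has already stopped.

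Let $k$ be the first iteration at which Algorithm~\ref{alg:classic-ssia-conc-alt}'s test fires, say because $\prof(\sigma_k)=\emptyset$ (the symmetric case is identical). A trivial induction shows that $\sigma_j = \sigma_k$ for every $j \geq k$: since $\prof$ depends only on the strategy being updated, $\prof(\sigma_j)\cap\sigma^c_{\tau_j} \subseteq \prof(\sigma_k) = \emptyset$, so the \Pmax{} update rule has nothing to switch. By the convention fixing counter strategies, $\tau^c_{\sigma_j} = \tau^c_{\sigma_k}$ is then also constant for $j \geq k$, so from iteration $k$ onwards \Pmin's update draws from $\prof(\tau_j) \cap \tau^c_{\sigma_k}$.

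The crucial observation is that, because $\tau^c_{\sigma_k}$ is a \emph{positional} (hence single-valued) strategy, any edge $(v,v')$ picked from $P \subseteq \tau^c_{\sigma_k}$ must satisfy $v' = \tau^c_{\sigma_k}(v)$; every update of $\tau$ past iteration $k$ therefore overwrites $\tau_j(v)$ with $\tau^c_{\sigma_k}(v)$. Once $\tau_j(v) = \tau^c_{\sigma_k}(v)$, no further update at $v$ is possible, as the only $\tau^c_{\sigma_k}$-edge from $v$ already agrees with $\tau_j(v)$ and so induces no change. Consequently, the cardinality $|\{v \in V_\mMIN \mid \tau_j(v) \neq \tau^c_{\sigma_k}(v)\}|$ strictly decreases in every iteration $j \geq k$ where $\tau_{j+1} \neq \tau_j$; since this cardinality is at most $|V_\mMIN|$ to begin with, at most $|V_\mMIN|$ such non-trivial iterations can occur, after which $\tau_{j+1} = \tau_j$ joins $\sigma_{j+1} = \sigma_j$ and Algorithm~\ref{alg:classic-ssia-conc} returns.

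This gives at most $|V_\mMIN|+1$ extra iterations past $k$ in the case $\prof(\sigma_k)=\emptyset$, and symmetrically at most $|V_\mMAX|+1$ in the case $\prof(\tau_k)=\emptyset$, so the overall gap is $O(|V|)$ as claimed. The main obstacle is the monotonicity argument in the previous paragraph: everything else is essentially bookkeeping, and the shrinkage relies squarely on the positionality (single-valuedness) of the fixed counter strategy, which is what makes each $\tau$-update at a vertex irreversible.
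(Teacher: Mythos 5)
Your proposal is correct and follows essentially the same route as the paper: once one player's strategy is optimal (so its profitable-update set is empty), it and its optimal counter strategy are frozen, and every subsequent update of the other player's strategy switches some position to agree with that fixed positional counter strategy, which can happen at most once per position, yielding a linear bound on the tail. Your write-up merely makes explicit the bookkeeping (that the two algorithms coincide until the earlier stopping test fires) that the paper leaves implicit.
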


\begin{proof}
Let $\sigma$ be an optimal strategy for $\mathcal G$.
When starting with a strategy pair $\sigma,\tau_0$ for some strategy $\tau_0$ of
\Pmin, we first construct the optimal counter strategies $\tau^c_\sigma$ and $\sigma_{\tau_0}$.
As $\sigma$ is optimal and $\mathcal G$ maximum identifying,
$\prof(\sigma)=\emptyset$, and strategy improvement will not change it. 
In particular, our algorithm will always provide $\sigma' = \sigma$,
irrespective of the optimal counter strategy $\sigma_{\tau_i}^c$ to a strategy
$\tau_i$ of \Pmin.
This also implies that $\tau^c_\sigma$ will not change.
It is now easy to see that, unless $\tau_i' = \tau_i$, $\tau_{i+1} = \tau_i'$
differs from $\tau_i$ in at least one decision, and it differs by adhering to
$\tau^c_\sigma$ at the positions where it differs ($\forall v \in
V_{\min}.\ \tau_i(v) \neq \tau_{i+1}(v) \Rightarrow \tau_{i+1}(v) =
\tau^c_\sigma(v)$). 
Such an update can happen at most once for each \Pmin\ position.
 The argument for starting with an optimal strategy $\tau$ of \Pmin\ is similar. 
\end{proof}

\section{Friedmann's Traps}
\label{sec:friedmann}

In a seminal work on the complexity of strategy improvement
\cite{Friedmann/11/lower}, Friedmann uses a class of parity games called
\emph{1-sink parity games}. These games contain a \emph{sink} node with 
the weakest odd parity in a max-parity game. This sink node is reachable from
every other node in the game and such a game is won by \POne{} eventually. 
Figure \ref{fig:lbgame} shows a lower bound game from \cite{Friedmann/11/lower}.

In order to obtain an exponential lower bound for the classic strategy improvement algorithm with the locally optimising policy, these sink games implement a binary counter realised by a gadget called a \emph{cycle gate} which consists of two components. With $n$ cycle gates, we have a representation of the $n$ bits for an $n$ bit counter. The first component of a cycle gate is called a \emph{simple cycle}. In Figure \ref{fig:lbgame}, the three smaller boxes shown in yellow are the simple cycles of the game. These simple cycles encode the bits of the counter. The second component of the cycle gate gadget is called a \emph{deceleration lane}. This structure serves to ensure that any profitable updates to strategies are postponed by cycling through seemingly more profitable improvements, in the order $r, s, a_1, a_2, \ldots$, before eventually turning to $e_i$.
This structure is shown as a shaded blue rectangle in Figure \ref{fig:lbgame}.

\begin{figure*}
  \begin{center}
\scalebox{0.636}{
    \begin{tikzpicture}
      \tikzstyle{dummy}=[draw,dashed,circle,minimum size=2em,inner sep=0em]
      \tikzstyle{eloc}=[draw,circle,minimum size=3.5em,inner sep=0em]
      \tikzstyle{oloc}=[draw,minimum size=3em,inner sep=0em]

      \tikzstyle{trans}=[-latex, rounded corners]

      \node[eloc] at (0,0) (c)  {$c:28$};
      \node[eloc] at (0,2) (t1)  {$t_1:15$};
      \node[eloc] at (0,4) (t2)  {$t_2:17$};
      \node[eloc] at (0,6) (t3)  {$t_3:19$};
      \node[eloc] at (0,8) (t4)  {$t_4:21$};
      \node[eloc] at (0,10) (t5)  {$t_5:23$};
      \node[eloc] at (0,12) (t6)  {$t_6:25$};

      \node[oloc] at (2,2) (a1)  {$a_1:16$};
      \node[oloc] at (2,4) (a2)  {$a_2:18$};
      \node[oloc] at (2,6) (a3)  {$a_3:20$};
      \node[oloc] at (2,8) (a4)  {$a_4:22$};
      \node[oloc] at (2,10) (a5)  {$a_5:24$};
      \node[oloc] at (2,12) (a6)  {$a_6:26$};

      \fill[blue!20!white,draw, fill opacity=0.2] (-2, -1) rectangle (3, 14);

      \node[eloc] at (6,3) (d1)  {$d_1:3$};
      \node[eloc] at (6,7) (d2)  {$d_2:7$};
      \node[eloc] at (6,11) (d3)  {$d_3:11$};

      \node[oloc] at (8,3) (e1)  {$e_1:4$};
      \node[oloc] at (8,7) (e2)  {$e_2:8$};
      \node[oloc] at (8,11) (e3)  {$e_3:12$};

      \fill[yellow!40!white,draw, fill opacity=0.2] (5,2.2) rectangle (9, 3.8);
      \fill[yellow!40!white,draw, fill opacity=0.2] (5,6.2) rectangle (9, 7.8);
      \fill[yellow!40!white,draw, fill opacity=0.2] (5,10.2) rectangle (9, 11.8);

      \node[oloc] at (9,1) (f1)  {$f_1:35$};
      \node[oloc] at (9,5) (f2)  {$f_2:39$};
      \node[oloc] at (9,9) (f3)  {$f_3:43$};

      \node[oloc] at (13,3) (h1)  {$h_1:36$};
      \node[oloc] at (13,7) (h2)  {$h_2:40$};
      \node[oloc] at (13,11) (h3)  {$h_3:44$};

      \node[eloc] at (13,1) (g1)  {$g_1:6$};
      \node[eloc] at (13,5) (g2)  {$g_2:10$};
      \node[eloc] at (13,9) (g3)  {$g_3:14$};

      \node[eloc] at (16,3) (k1)  {$k_1:33$};
      \node[eloc] at (15.2,7) (k2)  {$k_2:37$};
      \node[eloc] at (16,11) (k3)  {$k_3:41$};

      \node[eloc, fill=black!20!white] at (9,-1) (r)  {$r:32$};
      \node[oloc] at (17,7) (x)  {$x:1$};
      \node[eloc,fill=black!20!white] at (11.5,13) (s)  {$s:30$};

      \draw[trans] (a1) -- (t1);
      \draw[trans] (a2) -- (t2);
      \draw[trans] (a3) -- (t3);
      \draw[trans] (a4) -- (t4);
      \draw[trans] (a5) -- (t5);
      \draw[trans] (a6) -- (t6);

      \draw[trans] (d1) -- (a1);
      \draw[trans] (d1) -- (a2);

      \draw[trans] (d2) -- (a1);
      \draw[trans] (d2) -- (a2);
      \draw[trans] (d2) -- (a3);
      \draw[trans] (d2) -- (a4);

      \draw[trans] (d3) -- (a1);
      \draw[trans] (d3) -- (a2);
      \draw[trans] (d3) -- (a3);
      \draw[trans] (d3) -- (a4);
      \draw[trans] (d3) -- (a5);
      \draw[trans] (d3) -- (a6);

      \draw[trans] (t6) -- (t5);
      \draw[trans] (t5) -- (t4);
      \draw[trans] (t4) -- (t3);
      \draw[trans] (t3) -- (t2);
      \draw[trans] (t2) -- (t1);
      \draw[trans] (t1) -- (c);

      \draw[trans] (t6) --  +(-1, +0.5) node[dummy,pos=1,left] {$r$};
      \draw[trans] (t6) --  +(-1, -0.5) node[dummy,pos=1,left] {$s$};

      \draw[trans] (t5) --  +(-1, +0.5) node[dummy,pos=1,left] {$r$};
      \draw[trans] (t5) --  +(-1, -0.5) node[dummy,pos=1,left] {$s$};

      \draw[trans] (t4) --  +(-1, +0.5) node[dummy,pos=1,left] {$r$};
      \draw[trans] (t4) --  +(-1, -0.5) node[dummy,pos=1,left] {$s$};

      \draw[trans] (t3) --  +(-1, +0.5) node[dummy,pos=1,left] {$r$};
      \draw[trans] (t3) --  +(-1, -0.5) node[dummy,pos=1,left] {$s$};

      \draw[trans] (t2) --  +(-1, +0.5) node[dummy,pos=1,left] {$r$};
      \draw[trans] (t2) --  +(-1, -0.5) node[dummy,pos=1,left] {$s$};

      \draw[trans] (t1) --  +(-1, +0.5) node[dummy,pos=1,left] {$r$};
      \draw[trans] (t1) --  +(-1, -0.5) node[dummy,pos=1,left] {$s$};

      \draw[trans] (c) --  +(-1, +0.5) node[dummy,pos=1,left] {$r$};
      \draw[trans] (c) --  +(-1, -0.5) node[dummy,pos=1,left] {$s$};

      \draw[trans] (d1) --  +(-0.5, +1) node[dummy,pos=1,above] {$r$};
      \draw[trans] (d1) --  +(0.5, +1) node[dummy,pos=1,above] {$s$};

      \draw[trans] (d2) --  +(-0.5, +1) node[dummy,pos=1,above] {$r$};
      \draw[trans] (d2) --  +(0.5, +1) node[dummy,pos=1,above] {$s$};

      \draw[trans] (d3) --  +(-0.5, +1) node[dummy,pos=1,above] {$r$};
      \draw[trans] (d3) --  +(0.5, +1) node[dummy,pos=1,above] {$s$};

      \draw[trans] (d3) to [bend left=20] (e3);
      \draw[trans] (e3) to [bend left=20] (d3);

      \draw[trans] (d2) to [bend left=20] (e2);
      \draw[trans] (e2) to [bend left=20] (d2);

      \draw[trans] (d1) to [bend left=20] (e1);
      \draw[trans] (e1) to [bend left=20] (d1);

      \draw[trans] (g3) -- (f3);
      \draw[trans] (f3) -- (e3);
      \draw[trans] (e3) -- (h3);
      \draw[trans] (h3) -- (k3);
      \draw[trans] (g3) -- (k3);
      \draw[trans] (k3) -- (x);

      \draw[trans] (g2) -- (f2);
      \draw[trans] (f2) -- (e2);
      \draw[trans] (e2) -- (h2);
      \draw[trans] (h2) -- (k2);
      \draw[trans] (g2) -- (k2);
      \draw[trans] (k2) -- (x);
      \draw[trans] (k2) -- (g3);

      \draw[trans] (g1) -- (f1);
      \draw[trans] (f1) -- (e1);
      \draw[trans] (e1) -- (h1);
      \draw[trans] (h1) -- (k1);
      \draw[trans] (g1) -- (k1);
      \draw[trans] (k1) -- (x);
      \draw[trans] (k1) -- (g2);
      \draw[trans] (k1) -- (g3);

      \draw[trans] (s) -- (f1);
      \draw[trans] (s) -- (f2);
      \draw[trans] (s) -- (f3);
      \draw[trans] (s) -- +(5, 0) -| (x);

      \draw[trans] (r) -- (g1);
      \draw[trans] (r) -- (g2);
      \draw[trans] (r) -- (g3);
      \draw[trans] (r) -- +(5, 0) -| (x);

      \draw[trans] (x) to [loop right]  (x);

    \end{tikzpicture}
}
  \end{center}
  \caption{Friedmann's lower bound game for the locally optimal strategy improvement algorithm}
  \label{fig:lbgame}
\end{figure*}
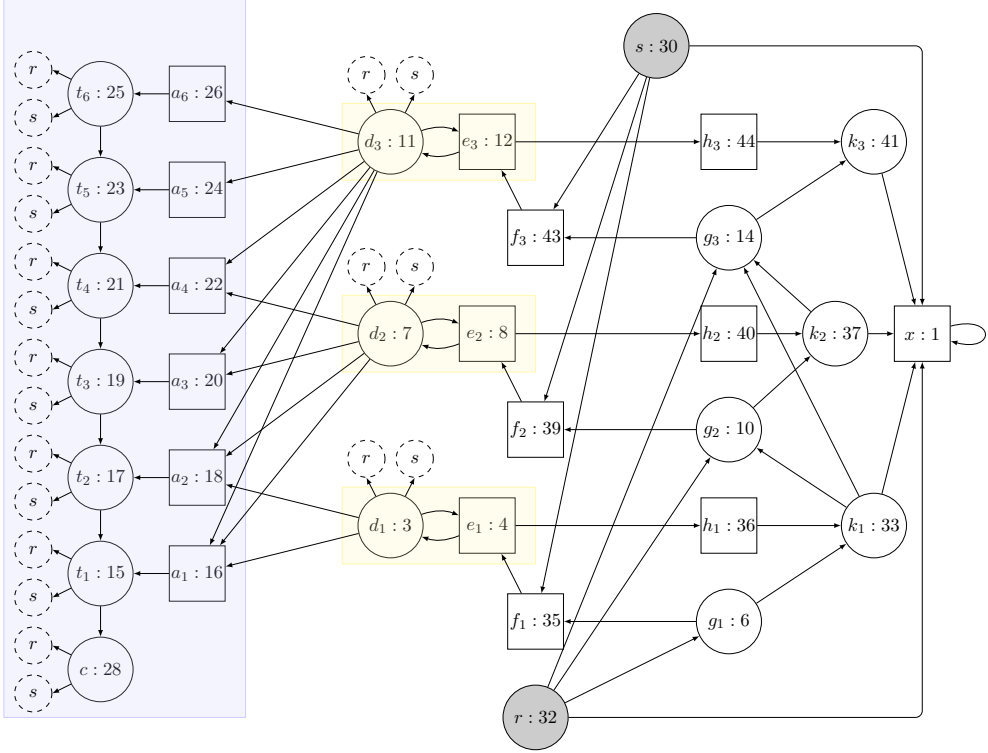

A simple cycle consists of exactly one \PZero{} controlled node $d$ with a weak
odd colour $k$ and one \POne{} controlled node $e$ with the even colour
$k+1$. The \PZero{} node is also connected to some set of external nodes in the
game  and the \POne{} node is connected to an output node with a high even
colour on a path to the sink node. Given a strategy $\sigma$, we say that a
simple cycle is closed if we have an edge $\sigma(d) =e$. 
Otherwise, we say that the simple cycle is open. Opening and closing cycles
correspond to unsetting and setting bits. We then say a cycle gate is
\emph{open} or \emph{closed} when its corresponding simple cycle is open or
closed respectively.  

In these lower bound games, the simple cycles are connected to the deceleration lane in such a way that lower valued cycles have less edges entering the deceleration lane ensuring that lower open cycles close before higher open cycles. This allows the lesser significant bits to be set and reset before the higher significant bits.

The deceleration lane hides sensible improvements, thus making the players take more iterations before taking the best improvement. 
It is then shown in \cite{Friedmann/11/lower} that incrementing a bit state always requires more than one strategy iteration in 4 different phases.
This gadget thus counts an exponential number of improvement steps taken by the strategy improvement algorithm to flip $n$ bits. 
For a detailed exposition of the gadget and the exponential lower bound
construction, we refer the reader to \cite{Friedmann/11/lower}.

\subsection{Escaping the traps with symmetric strategy improvement}
We discuss the effect of symmetric strategy improvement on Friedmann's traps,
with a focus on the simple cycles. 
Simple cycles are the central component of the cycle gates and the heart of the
lower bound proof. 
As described above, an $n$-bit counter is represented by $n$ cycle gates, each
cycle gate embedding a smaller simple cycle. 
These simple cycles are reused exponentially often to represent $n$ bits. Both
players have the choice to open or close the simple cycles. 

The optimal strategy of both players in the simple cycles of Figure
\ref{fig:lbgame} is to turn right. 
(For \PZero{}, one could say that he wants to leave the cycle, and for \POne{},
one could say that she wants to stay in it.) 
When the players agree to stay in the cycle, \PZero{} wins the parity game. In
fact these are the only places where \PZero{} can win positionally in this parity game. 
When running the symmetric strategy improvement algorithm for \PZero{}, the
optimal counter strategy by \POne{} is to move to the right in simple cycles
where \PZero{} is moving to the right, and to move left in all other simple
cycles. 

As mentioned before, Friedmann \cite{Friedmann/11/lower} showed that, when
looking at an abstraction of the \PZero{} strategy that only distinguishes the
decisions of turning right or not turning right in the simple cycles, then they
essentially behave like a binary counter that, with some delay (caused by the
deceleration lane) will `count up'. More precisely, one step after the $i^{th}$
bit has been activated, all lower bits are reset. 

We now discuss how symmetric strategy improvement can beat this mechanism by
taking the view of both players into account. 
For this, we consider a starting configuration, where \POne{} moves to the right
in the $j$ most significant simple cycle positions, where $j$ can be $0$. 
Note that, when \POne{} moves right in all of these positions, she has found her
optimal strategy and we can invoke Theorem \ref{theorem:optlin} to show that the
algorithm terminates in a linear number of steps---or simply stop when using the
alternative stopping condition. 

The first observation is that changing the decision to moving left will not lead
to an improvement, as it produces a winning cycle of a quality (leading even
colour) higher than the quality of any cycle available for \PZero{} under the
current strategy of \POne{}. 
Let us now consider the less significant position $j+1$. 
First, we observe that moving to the right is a superior strategy.
This can easily be seen:
moving to the left produces a cycle with a dominating even colour and thus turns
out to be winning for \PZero{}.  
Moving to the right in position $j+1$ and (by our assumption) all more
significant positions removes this cycle and implies that the leading colour
from this position is 1. This is clearly better for \POne{}. 
If \POne{} uses a strategy where $j+1$ is the most significant position where
she decides to move to the left, we have the following case distinctions for
\PZero{}'s strategy in this simple cycle: 
\begin{enumerate}
\item 
\PZero{} moves to the right in this simple cycle. Then moving to the right is
also the optimal counter strategy for \POne{}, and her strategy will be updated
accordingly. 
\item 
  \PZero{} does not move right in this simple cycle with her current strategy $\sigma$. 
  Moving right in this simple cycle is among $\prof(\sigma)$, as one even colour
  is added to the set in the quality measure in the local comparison. It is also
  the choice for the optimal counter strategy $\sigma^c_\tau$ to the current
  strategy $\tau$ of \POne{}, as this is the only way for \PZero{} to produce a
  valuation with the dominating even colour of this simple cycle, while to
  valuation with a higher even colour is possible. 
\end{enumerate}
Taking these two cases into consideration, \POne{} will move to the right in the
$j$ most significant positions after $2j$ improvement steps. 
When \PZero{} has found his optimal strategy, we can invoke Theorem
\ref{theorem:optlin} to show termination in linear steps for the algorithm. 

There are similar arguments for all kinds of traps that Friedmann has developed
for strategy improvement algorithms. We have not formalised these arguments on
other instances, but provided the number of iterations needed by our symmetric
strategy improvement algorithm for all of them in the next section. 

Note that the way in which Friedmann traps asymmetric strategy improvement has proven to be quite resistant to the improvement policy (snare \cite{Fearnley/10/snare}, random facet \cite{Ludwig/95/random,BjorklundVorobyov/07/subexp}, globally optimal \cite{Schewe/08/improvement}, etc.).
From the perspective of the traps, the different policies try to aim at a minor point in the mechanism of the traps, and this minor point is adjusted. The central mechanism, however is not affected.
All of these examples have some variant of simple cycles at the heart of the counter and a deceleration lane to orchestrate the timely counting.

Symmetric strategy improvement aims at the mechanism of the traps themselves.
It seems that examples that trap symmetric strategy improvement algorithms need to do more than just trapping both players (which could be done by copying the trap with inverse roles), they need to trap them simultaneously.
It is not likely to find a proof that such traps do not exist, as this would imply a proof that symmetric strategy improvement solves parity (or, depending on the proof, mean or discounted payoff) games in polynomial time.
But it seems that such traps would need a different structure.
A further difference to asymmetric strategy improvement is that the deceleration lane ceases to work.

Taking into account that finding traps for asymmetric strategy improvement took decades and was very insightful, this looks like an interesting challenge for future research.

\section{Experimental Results}
\label{sec:expt}
We have implemented the symmetric strategy improvement algorithm for parity games and
compared it with the standard strategy improvement algorithm with the popular
locally optimising and other switching rules.
To generate various examples we used the tools \texttt{steadygame} and
\texttt{stratimprgen} that comes as a part of the parity game solver collection
\textsc{PGSolver}~\cite{LF09}.
We have compared the performance of our algorith on parity games with 100 positions
(see appendix) and found that the locally optimising policy outperforms other 
switching rules. We therefore compare our symmetric strategy improvement algorithm
with the locally optimising strategy improvement below.

Since every iteration of both algorithms is rather similar---one iteration of
our symmetric strategy improvement algorithm essentially runs two copies of an
iteration of a classical strategy improvement algorithm---and can be performed
in polynomial time, the key data to compare these algorithms is the number of
iterations taken by both algorithms.

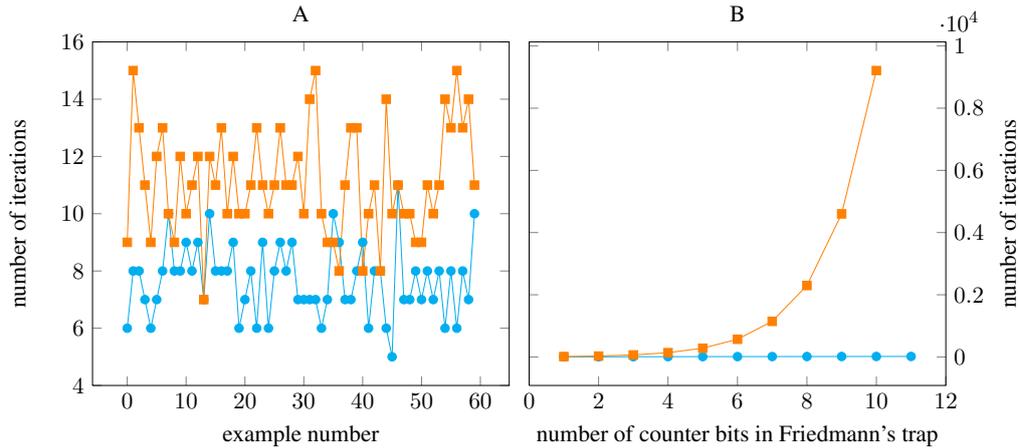
\begin{figure}[h]
\scalebox{0.8}{
  \begin{tikzpicture}
    \begin{axis}[
        title={A}, 
        xlabel={example number},
        ylabel={number of iterations},
        legend pos=north west,
        legend style={fill=none}       
      ]
      \addplot[
        color=cyan,
        mark=*,
       ]
      coordinates {
        (0,6)(1,8)(2,8)(3,7)(4,6)(5,7)(6,8)(7,10)(8,8)(9,8)
        (10,9)(11,8)(12,9)(13,7)(14,10)(15,8)(16,8)(17,8)(18,9)(19,6)
        (20,7)(21,8)(22,6)(23,9)(24,6)(25,8)(26,9)(27,8)(28,9)(29,7)
        (30,7)(31,7)(32,7)(33,6)(34,7)(35,10)(36,9)(37,7)(38,7)(39,8)
        (40,9)(41,6)(42,8)(43,8)(44,6)(45,5)(46,11)(47,7)(48,7)(49,8)
        (50,7)(51,8)(52,7)(53,8)(54,6)(55,8)(56,6)(57,8)(58,7)(59,10)
      };
      \addplot[
        color=orange,
        mark=square*]
      coordinates {
        (0,9)(1,15)(2,13)(3,11)(4,9)(5,12)(6,13)(7,10)(8,9)(9,12)
        (10,10)(11,11)(12,12)(13,7)(14,12)(15,11)(16,13)(17,10)(18,12)(19,10)
        (20,10)(21,11)(22,13)(23,11)(24,10)(25,11)(26,13)(27,11)(28,11)(29,12)
        (30,10)(31,14)(32,15)(33,10)(34,9)(35,9)(36,8)(37,11)(38,13)(39,13)
        (40,8)(41,10)(42,11)(43,8)(44,14)(45,10)(46,11)(47,10)(48,10)(49,9)
        (50,9)(51,11)(52,10)(53,11)(54,14)(55,13)(56,15)(57,13)(58,14)(59,11)
      };
     \end{axis}
  \end{tikzpicture}
  \hfill
  \begin{tikzpicture}
    \begin{axis}[
        title={B},
        xlabel={number of counter bits in Friedmann's trap},
        ylabel={number of  iterations},
        yticklabel pos=right,ylabel near ticks,
        legend pos=north west,
      ]
      \addplot[
        color=cyan,
        mark=*,
      ]
      coordinates {
        (1,3)(2,5)(3,7)(4,9)(5,10)(6,11)(7,12)(8,13)(9,14)(10,15)(11,16)
      };
      \addplot[
        color=orange,
        mark=square*
      ]
      coordinates {
        (1,11)(2,29)(3,65)(4,137)(5,281)(6,569)(7,1145)(8,2297)(9,4601)(10,9209)
      };
     \end{axis}
  \end{tikzpicture}
}
\caption{These plots compare the performance of the symmetric
  strategy improvement algorithm  (data points in cyan circles) with standard strategy 
  improvement using the locally optimising policy rule (data points in orange
  squares). 
  The plot on the left side is for random examples generated using the 
  \texttt{steadygame 1000 2 4 3 5 6} command, while the plot on the right is for
  Friedmann's trap from the previous section generated by the command 
  \texttt{stratimprgen -pg switchallsubexp i}.}
\label{fig:expfigs}
\end{figure}

Symmetric strategy improvement will often rule out improvements at individual
positions: it disregards profitable changes of Player Max and Min if they do not
comply with $\sigma^c_\tau$ and $\tau^c_\sigma$, respectively. 
It is well known that considering fewer updates can lead to a significant
increase in the number of updates on random examples and benchmarks. 
An algorithm based on the random-facet
method \cite{Ludwig/95/random,BjorklundVorobyov/07/subexp}, e.g., needs around a
hundred iterations on the random examples with 100 positions we have drawn, simply
because it updates only a single position at a time. 
The same holds for a random-edge policy where only a single position is updated.
The figures for these two methods are given in the appendix.

It is therefore good news that symmetric strategy improvement does not display a
similar weakness. 
It even uses less updates when compared to classic strategy improvement with the
popular locally optimising and locally random policy rules. 
Note also that having less updates can lead to a faster evaluation of the
update, because unchanged parts do not need to be
re-evaluated~\cite{BjorklundVorobyov/07/subexp}. 

As shown in Figure~\ref{fig:expfigs}, the symmetric strategy improvement
algorithm not only performs better (on average) in comparison with the traditional
strategy improvement algorithm with the locally optimising policy rule, but also
avoids Friedmann's traps for the strategy improvement algorithm. 
The following table shows the performance of symmetric strategy improvement
algorithm for Friedmann's traps for other common switching rules.
It is clear that our algorithm is not exponential for these classes of
examples. 
\begin{center}
\begin{tabular}{l | c c c c c c c c c c}
\hline
Switch Rule & 1 & 2 & 3 & 4 & 5 & 6 & 7 & 8 & 9 & 10 \\ [0.5ex] 
\hline
Cunningham & 2 &6&9&12&15&18&21&24&27&30\\
CunninghamSubexp &1&1&1&1&1&1&1&1&1&1\\
FearnleySubexp &4&7&11&13&17&21&25&29&33&37\\
FriedmannSubexp&4&9&13&15&19&23&27&31&35&39\\
RandomEdgeExpTest &1&2&2&2&2&2&2&2&2&2\\
RandomFacetSubexp &1&2&7&9&11&13&15&17&19&21\\
SwitchAllBestExp &4&5&8&11&12&13&15&17&18&19\\
SwitchAllBestSubExp &5&7&9&11&13&15&17&19&21&23\\
SwitchAllSubExp &3&5&7&9&10&11&12&13&14&15\\
SwitchAllExp &3&4&6&8&10&11&12&14&16&18\\
ZadehExp &-&6&10&14&18&21&25&28&32&35\\
ZadehSubexp &5&9&13&16&20&23&27&30&34&37\\
\hline
\end{tabular}
\end{center}

\section{Discussion}
\label{sec:discuss}
We have introduced symmetric approaches to strategy improvement, where the
players take inspiration from the respective other's strategy when improving
theirs. 
This creates a rather moderate overhead, where each step is at most twice as expensive
as a normal improvement step. 
For this moderate price, we have shown that we can break the traps Friedmann has
introduced to establish exponential bounds for the different update policies in
classic strategy improvement
\cite{Friedmann/11/lower,Friedmann/11/Zadeh,Friedmann/13/snare}. 

In hindsight, attacking a symmetric problem with a symmetric approach seems so
natural, that it is quite surprising that it has not been attempted
immediately. 
There are, however, good reasons for this, but one should also consent that the
claim is not entirely true: 
the concurrent update to the respective optimal counter strategy has been
considered quite early
\cite{Friedmann/11/lower,Friedmann/11/Zadeh,Friedmann/13/snare}, but was
dismissed, because it can lead to cycles \cite{Condon93onalgorithms}.

The first reason is therefore that it was folklore that symmetric strategy
improvement does not work. 
The second reason is that the argument for the techniques that we have developed
in this paper would have been restricted to beauty until some of
the appeal of classic strategy improvement was caught in Friedmann's
traps. Friedmann himself, however, remained optimistic: 

\begin{quote}
We think that the strategy iteration still is a promising candidate for a
polynomial time algorithm, however it may be necessary to alter more of it than
just the improvement policy.
\end{quote}

\noindent This is precisely, what the introduction of symmetry and co-improvement tries to do.

\bibliography{bib}
 \newpage
 \appendix
\section{Symmetric Strategy Improvement algorithm  versus classic
  strategy improvement algorithm with various switching rules}

\begin{figure}[h]
\begin{center}
  \begin{tikzpicture}
    \begin{axis}[
        title={A}, 
        xlabel={example number},
        ylabel={number of iterations},
        legend pos=north west,
        legend style={fill=none}       
      ]
      \addplot[
        color=cyan,
        mark=*,
       ]
      coordinates {
        (1,5)(2,4)(3,5)(4,7)(5,6)(6,7)(7,6)(8,6)(9,5)(10,7)
      };
      \addplot[
        color=orange,
        mark=square*]
      coordinates {
        (1,5)(2,6)(3,8)(4,8)(5,10)(6,6)(7,8)(8,8)(9,5)(10,6)
      };

      \addplot[
        color=red,
        mark=triangle*]
      coordinates {
        (1,72)(2,60)(3,89)(4,113)(5,85)(6,79)(7,55)(8,96)(9,87)(10,64)
      };

      \addplot[
        color=blue,
        mark=triangle*]
      coordinates {
        (1,92)(2,63)(3,107)(4,88)(5,98)(6,96)(7,60)(8,93)(9,85)(10,63)
      };

      \addplot[
        color=green,
         mark=triangle*]
      coordinates {
        (1,15)(2,13)(3,13)(4,15)(5,14)(6,14)(7,14)(8,16)(9,16)(10,10)
       };
     \end{axis}
  \end{tikzpicture}
\end{center}

\caption{These plots compare the performance of the symmetric
  strategy improvement algorithm  (data points in cyan circles) with standard strategy 
  improvement using the locally optimising policy rule (data points in orange
  squares), random-edge switching rule (data points in red triangles),
  random-facet rule (data points in blue triangles), and switch-half rule (data
  point in green triangles). 
  These plots are for random examples generated using the 
  \texttt{steadygame 100 2 4 3 5 6} command from PGSolver. 
  The results from randomized switching rules (random-edge, random-facet, and
  switch-half)  presented here are taken as average number of iterations over
  four  executions of the corresponding algorithms.} 
\label{fig:expfigees}
\end{figure}
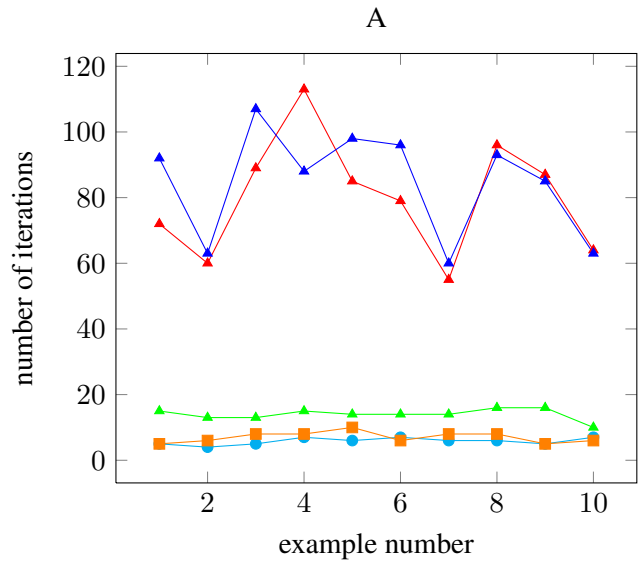

\end{document}